\documentclass[onecolumn,authordate]{miri-tech-article}

\bibliography{MIRIPublications.bib,Inbox.bib,General.bib}{}

\usepackage{amsmath,amsthm,amsfonts}
\usepackage{verbatimbox}
\usepackage{fancyvrb}
\usepackage{mathtools}

\DeclarePairedDelimiter\floor{\lfloor}{\rfloor}
\usepackage{multirow,array}

\setcounter{secnumdepth}{2}

\newtheorem{theorem}{Theorem}
\newtheorem*{theorem*}{Theorem}
\newtheorem{proposition}[theorem]{Proposition}
\newtheorem{property}{Property}

\numberwithin{equation}{section}
\newcommand{\eqn}[1]{\begin{equation}#1\end{equation}}

\theoremstyle{definition}
\newtheorem{definition}[theorem]{Definition}

\newcommand{\NN}{\mathbb{N}}
\newcommand{\Ee}{\mathcal{E}}
\newcommand{\Oo}{\mathcal{O}}
\newcommand{\Ss}{\mathcal{S}}

\newcommand{\proves}[1]{\underset{#1}{\vdash}}
\newcommand{\bx}[1]{\Box_{#1}}

\newcommand{\PA}{\mathcal{P}\!\mathcal{A}}
\newcommand{\Lang}{\mathrm{Lang}}
\newcommand{\Const}{\mathrm{Const}}
\renewcommand{\implies}{\rightarrow}
\newcommand{\Implies}{\;\;\Rightarrow\;\;}
\renewcommand{\to}{\rightarrow}
\renewcommand{\iff}{\leftrightarrow}
\newcommand{\qquote}[1]{\left\ulcorner #1 \right\urcorner}
\newcommand{\numeral}{{}^\circ}
\newcommand{\AND}{{\textrm{ and }}}

\renewcommand{\lg}{\mathrm{lg\hspace{0.3ex}}}
\renewcommand{\-}{^{-1}}

\newcommand{\CB}{\mathrm{CooperateBot}}
\newcommand{\FB}{\mathrm{FairBot}}

\title{Parametric Bounded L\"{o}b's Theorem and Robust Cooperation of Bounded Agents}

\author{Andrew Critch \\ Machine Intelligence Research Institute \\ critch@intelligence.org}

\begin{document}

\publishingnote{Preprinted at \href{https://arxiv.org/abs/1602.04184}{arXiv:1602.04184 [cs:GT]}}

\maketitle

\begin{abstract}
L\"{o}b's theorem and G\"{o}del's theorems make predictions about the behavior of systems capable of self-reference with unbounded computational resources with which to write and evaluate proofs.  However, in the real world, systems capable of self-reference will have limited memory and processing speed, so in this paper we introduce an effective version of L\"{o}b's theorem which is applicable given such bounded resources.  
These results have powerful implications for the game theory of bounded agents who are able to write proofs about themselves and one another, including the capacity to out-perform classical Nash equilibria and correlated equilibria, attaining mutually cooperative program equilibrium in the Prisoner's Dilemma.  Previous cooperative program equilibria studied by \citet{Tennenholtz:2004:Program} and \citet{Fortnow:2009:Program} have depended on tests for program equality, a fragile condition, whereas ``L\"{o}bian'' cooperation is much more robust and agnostic of the opponent's implementation.
\end{abstract}

\section{Background and Overview}\label{sec:bo}

The arc of this paper begins and ends with a discussion of the Prisoner's Dilemma, but it passes through a new result in provability logic.  Thus, it will hopefully be of interest to game theorists and logicians alike.

\subsection{Open-source Prisoner's Dilemma}

Consider the Prisoner's Dilemma, a game with two possible actions C (Cooperate) and D (Defect), with the following payoff matrix:

  \begin{table}[ht]
  \centering
    \setlength{\extrarowheight}{2pt}
    \begin{tabular}{cc|c|c|}
      & \multicolumn{1}{c}{} & \multicolumn{2}{c}{Player $2$}\\
      & \multicolumn{1}{c}{} & \multicolumn{1}{c}{$C$}  & \multicolumn{1}{c}{$D$} \\\cline{3-4}
      \multirow{2}*{Player $1$}  & $C$ & $(2,2)$ & $(0,3)$ \\\cline{3-4}
      & $D$ & $(3,0)$ & $(1,1)$ \\\cline{3-4}
    \end{tabular}
  \end{table}

\noindent In other words, by choosing $D$ over $C$, each player can destroy 2 units of its opponent's utility to gain 1 unit of its own.  As long as the payoffs are truly represented in the matrix---for example, there are no reputational costs of choosing $D$ that are not already imputed in the payoffs---then $(D,D)$ is the only Nash equilibrium, and the only correlated equilibrium.  In fact, irrespective of the opponent's move, it is better to defect.  It is therefore broadly believed that $(D,D)$ is an inevitable outcome between ``rational'' agents in a truly represented (non-iterated) Prisoner's Dilemma.  

But consider a version of the game---as first studied by \citet{Tennenholtz:2004:Program}---wherein each player is an algorithm which can read its opponent's source code, as well as its own, before the game.  Is $D$ still the obvious correct strategy?  As a warm-up, one can imagine designing various algorithmic ``agents'' to compete in such games.  For example, an agent who always cooperates:

\begin{Verbatim}[frame=single]
def CooperateBot(Opponent) :
	return C
\end{Verbatim}

\citet{Tennenholtz:2004:Program} considers a simple agent which cooperates if and only if the opponent is identitically equal to itself:

\begin{Verbatim}[frame=single]
def IsMeBot(Opponent) :
	if Opponent=IsMeBot
		return C
	else
		return D
\end{Verbatim}

When playing against IsMeBot, the opponent is incentivized to ``be IsMeBot'', and in particular, cooperate.  To capture this intuition, Tenenholtz defines a {\em \bf program equilibrium} to be a pair of agents (programs) competing in a game, with access to one another's source code, such that replacing either agent by a different agent would decrease its expected payoff.  Thus, a program equilibrium is a Nash equilibrium of the `meta-game' of choosing which program to play.  

Agents in a program equilibrium can return outputs that do not constitute a Nash equilibrium (of the object-level game), even in a one-shot game, as can be seen here: (IsMeBot,IsMeBot) is a program equilibrium, returning outputs (C,C) and payoffs (2,2).  This program equilibrium of IsMeBot is highly fragile, however: if we let IsMeBot' be the same program but with a tiny irrelevant change to its code---a comment perhaps---then IsMeBot will defect against it.  Agents studied by \citet{Fortnow:2009:Program} are similarly fragile.

To the end of someday designing real-world cooperative agents, it is therefore interesting to design a more ``robust'' cooperative agent, whose behavior does not depend too heavily on the details of the implementation of its opponent, but which nonetheless incentivizes its opponent to cooperate.  For this, consider:

\begin{Verbatim}[frame=single]
def FairBot_k(Opponent) :
	search for a proof of length k that  
		Opponent(FairBot_k) = C
	if found,
		return C
	else
		return D
\end{Verbatim}

\noindent Here a `proof of length k' means a mathematical proof---say, in some implementation of Peano Arithmetic---using fewer than $k$ characters (symbols) to write out as a text file.  To begin thinking about these agents, observe that

\begin{itemize}
\item $\CB(\FB_k) = C$, because $\CB$ always returns $C$;
\item $\FB_k(\CB) = C$ when $k$ is large enough to complete the shortest proof that $\CB(\FB_k)=C$ (which, given the simplicity of $\CB$, will be very short), and $D$ when $k$ is too small to complete the proof.

\end{itemize}

\subsection{The Example of FairBot vs FairBot}

\noindent The first interesting question that arises is then:

\begin{center}
What is $\FB_k(\FB_k)$?
\end{center}

\noindent It is not so hard to see that if $k$ is too small to complete any proofs, each FairBot returns $D$.  But suppose $k$ is extremely large, for example, $10^{100}$.  Does $\FB_k(\FB_k)$ find a proof that $\FB_k(\FB_k)=C$ and therefore return $C$, validating the proof?  Or does it continue searching for a proof that $\FB_k(\FB_k)=C$ until the proof bound is reached, and having found no such proof, return $D$, consistent with the failed proof search?

It is worth pausing a moment to reflect on this question, since, when given no hints, 100\% of the dozens of mathematicians and computer scientists I've seen asked it have answered incorrectly at first (myself included).

Consider that each instance of $\FB_k$ is waiting for a proof that the other $\FB_k$ will return $C$ before it will return $C$ itself, and since neither algorithm has a clause in its code to take a ``leap of faith'' in such a situation, it seems that neither algorithm will ``make the first move'', so their proof searches must simply keep searching until they reach their limit $k$ and return $D$.

However, this reasoning turns out to be incorrect, because of a version of L\"{o}b's Theorem that is the main result of this paper, proven in Section \ref{sec:blob}.  It implies that $\FB_k(\FB_k) = C$ for large $k$.  Aside from being surprising, this result opens up a whole class of behaviors that can outperform the classical $(D,D)$ equilibrium in a truly formulated, non-iterated Prisoner's Dilemma.  Moreover, this performance can be made more robust, into a statement about any two agents willing to cooperate based on a proof of their opponents' cooperation.

Such interesting ``L\"{o}bian" behavior first seemed plausible from the work of \citet{Barasz:2014:RobustCooperation} and \citet{LaVictoire:2014:PrisDilemmaLob}, who illustrated something like program equilibria among certain non-computable logical entities they called ``modal agents'', including an analog of FairBot in that context. 

\subsection{Robust Cooperative Program Equilibria}

The main application of this paper is to establish robust cooperative program equilibria for computationally bounded agents.  In particular, it is possible to write algorithms which are unexploitable in a Prisoner's Dilemma---that is, they never receive the undesirable outcome $(C,D)$ as Player 1---and which achieve the outcome $(C,C)$ against a variety of opponents, such that there is no incentive for their opponents to deviate from cooperation, even though there is no iteration or reputation to be earned in the game.  This is what we mean by ``robust cooperation''.

To summarize the result, we write $\bx{k}p$ for the statement ``$p$ can be proven using $k$ or fewer written symbols''.
Given a nonnegative increasing function $G$, we say that an agent $A_k$ taking a parameter $k \in \NN$ is {\bf G-fair} if
$$\proves{} \bx{k+G(\textrm{LengthOf}(Opp))}[Opp(A_k) = Cooperate] \implies A_k(Opp) = Cooperate$$
In other words, if $A_k$ finding a proof that its opponent cooperates is sufficient for $A_k$ to cooperate, we say it is $G$-fair, provided the proof lengths in the search did not exceed $k+G(\textrm{LengthOf}(Opp))$.  Then we have, in terms to be made precise later,

\begin{theorem*}[{\bf Robust cooperation of bounded agents}] If certain bounds are satisfied by the G\"{o}del encoding of our proof system, and the function $G$ exceeds a certain asymptotic lower bound, then for any $G$-fair agents $A_k$ and $B_k$, we have for all sufficiently large $m,n$,
$$A_m(B_n) = B_n(A_m) = Cooperate$$
\end{theorem*}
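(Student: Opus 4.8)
The plan is to reduce the statement to a single application of the parametric bounded L\"{o}b's theorem of Section \ref{sec:blob}, applied to the conjunction asserting mutual cooperation. First I would fix large $m,n$ and abbreviate $a := (A_m(B_n) = \mathrm{Cooperate})$ and $b := (B_n(A_m) = \mathrm{Cooperate})$, writing $\psi := a \wedge b$. By $G$-fairness of $A_m$ against the opponent $B_n$, and of $B_n$ against $A_m$, $\PA$ proves the two implications $\proves{}\bx{k_A} b \to a$ and $\proves{}\bx{k_B} a \to b$, where $k_A := m + G(\lg B_n)$ and $k_B := n + G(\lg A_m)$. Since $\PA$ is sound and $a,b$ concern halting bounded computations, it suffices to show $\proves{}\psi$: this forces $a$ and $b$ to be true, i.e.\ $A_m(B_n) = B_n(A_m) = \mathrm{Cooperate}$.

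The core step is to establish a bounded self-referential implication $\proves{}\bx{j}\psi \to \psi$ for a suitable budget $j$, so that bounded L\"{o}b can fire. Reasoning inside $\PA$, from $\bx{j}\psi$ I would first extract the two conjuncts, obtaining $\bx{j+c}a$ and $\bx{j+c}b$ at the cost of a constant overhead $c$ for appending a conjunction-elimination step to the witnessing proof; this is where the assumed bounds on the G\"{o}del encoding are needed, to guarantee that such proof-surgery inflates the symbol count only additively. Provided $j + c \le \min(k_A,k_B)$, monotonicity of $\bx{\cdot}$ upgrades these to $\bx{k_B}a$ and $\bx{k_A}b$, whereupon the two fairness implications yield $b$ and $a$ respectively, hence $\psi$. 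Thus $\proves{}\bx{j}\psi \to \psi$ whenever $j \le \min(k_A,k_B) - c$; this is the bounded analog of the one-line modal argument that underlies L\"{o}bian cooperation in the non-computable setting of \citet{Barasz:2014:RobustCooperation} and \citet{LaVictoire:2014:PrisDilemmaLob}.

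With this implication in hand, I would invoke the parametric bounded L\"{o}b's theorem, which converts $\proves{}\bx{j}\psi \to \psi$ into $\proves{}\psi$ once $j$ exceeds a threshold $T$ determined by the encoding constants and the length $\lg\psi$ of the target sentence. Combining the two constraints, the argument goes through as soon as $T + c \le \min(k_A,k_B)$, that is, as soon as $T(\lg\psi) + c \le \min\bigl(m + G(\lg B_n),\, n + G(\lg A_m)\bigr)$. Since $\lg\psi = \lg A_m + \lg B_n + O(1)$, the left side is controlled by the lengths of the two programs, while the right side contains both the free parameters $m,n$ and the values $G(\lg B_n)$ and $G(\lg A_m)$.

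Finally I would verify the asymptotics. The parameters $m,n$ enter the budgets additively and tend to infinity, while $\lg A_m$ and $\lg B_n$ grow only through the encoding of the parameters; the hypothesis that $G$ exceeds a prescribed asymptotic lower bound is exactly what forces $G(\lg B_n)$ and $G(\lg A_m)$ to dominate the dependence of the threshold $T(\lg\psi)$ on the opponent's length, so that both displayed inequalities hold for all sufficiently large $m,n$. I expect the genuine difficulty to lie not in this combinatorial orchestration but in the quantitative heart of the matter: proving the parametric bounded L\"{o}b's theorem with a threshold $T$ whose growth in $\lg\psi$ is slow enough to be absorbed, which in turn requires the structural bounds on the encoding that make necessitation, monotonicity, and boxed modus ponens incur only the controlled overheads used above. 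Getting these overhead estimates simultaneously tight is the crux.
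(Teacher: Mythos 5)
There is a genuine gap, and it sits at the heart of your argument: the ``fixed-sentence'' bounded L\"{o}b principle you invoke --- that $\proves{} \bx{j}\psi \implies \psi$ yields $\proves{}\psi$ once $j$ exceeds a threshold $T(\lg\psi)$ --- is not the theorem proved in Section \ref{sec:blob}, and in fact no such principle can hold. For a fixed sentence $\psi$ and fixed budget $j$, the statement $\bx{j}\psi$ is decidable (there are only finitely many candidate proofs of length at most $j$), so if $\psi$ is unprovable --- take $\psi = \mathrm{Con}(S)$ --- then $\neg\bx{j}\psi$ is true and provable in $S$ (by a long but finite verification), hence $\proves{} \bx{j}\psi \implies \psi$ holds vacuously for \emph{every} $j$, while $\psi$ remains unprovable. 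No threshold depending on $\lg\psi$ can repair this. Theorem \ref{thm:pblob} is essentially parametric: its hypothesis is a \emph{single} $S$-proof of the universally quantified statement $\forall k,\; \bx{f(k)}p(k)\implies p(k)$, where the budget $f(k)$ grows with the quantified parameter; having, for each fixed instance, a separate proof of that instance is strictly weaker (as the vacuous-truth example shows) and is all your construction supplies. By opening with ``fix large $m,n$'' you discard exactly the uniformity that makes the theorem true, and your fairness-derived implications $\proves{}\bx{j}\psi\implies\psi$ are per-instance facts of precisely the insufficient kind.

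The missing idea is the paper's ``eventual cooperation'' predicate, which converts the two-parameter problem into a one-parameter family to which Theorem \ref{thm:pblob} applies: $p(k) := \forall m>k,\,\forall n>k,\; \alpha(m,n) \AND \beta(n,m)$, where $\alpha(m,n) := [A_m(B_n)=Cooperate]$ and $\beta(n,m) := [B_n(A_m)=Cooperate]$. Working inside $S$ uniformly in $k$, Quantifier Distribution (applied twice, in the role your conjunction-elimination step was meant to play) passes from $\bx{f(k)}p(k)$ to $\bx{3C+4f(k)+2\lg m+\lg n}[\alpha(m,n)\AND\beta(n,m)]$ for all $m,n>k$; the arithmetic comparison $3C+4f(k)+2\lg m+\lg n < n+a(m)$ (and its mirror image with $m+b(n)$), secured by $G(\lg k) > 6f(k)$, upgrades these to the budgets demanded by $G$-fairness, and the fairness implications then close the loop, giving $\proves{} \forall k>k_2,\; \bx{f(k)}p(k)\implies p(k)$ --- a hypothesis of the correct uniform shape. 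One application of Parametric Bounded L\"{o}b then yields $\proves{}\forall k>\hat k,\; p(k)$, i.e.\ cooperation for all $m,n>\hat k+1$. Your closing asymptotic analysis (letting the lower bound on $G$ absorb the encoding overheads) is the right instinct and mirrors the paper's bound-checking, but it cannot rescue the per-sentence reduction; the quantification over $m,n$ must live \emph{inside} the L\"{o}bian predicate, not outside the whole argument.
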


This result depends crucially on a new version of L\"{ob}'s Theorem.
 
\subsection{L\"{o}b's Theorem}

L\"{o}b's Theorem states that, if $\bx{}p$ denotes the provability of statement $p$ in Peano Arithmetic (or any extension of it), then 
$$\bx{}(\bx{}p \implies p) \implies \bx{}p$$
If the reader has never encountered this result, consider the case where $p$ is the Riemann Hypothesis, $RH$.  Suppose that the Riemann Hypothesis is, unbeknownst to us, false.  Without yet knowing whether $RH$ is true, it is tempting for us to claim at least that {\em if RH is provable, then RH is true}, i.e. $\bx{}RH \implies RH$.  However, if that claim were itself provable, i.e. if $\bx{}(\bx{}RH \implies RH)$, then L\"{o}b's Theorem tells us that $\bx{}RH$---the Riemann Hypothesis is provable---which is very bad news for the soundness of our proof system if the Riemann Hypothesis is actually false!

Thus, L\"{o}b's Theorem defies the intuition that we might soundly prove the ``self-trust'' statement that {\em if we prove $p$, then $p$ is true}.  This counterintuitiveness is in fact the same phenomenon as the surprising outcome that $\FB_k(\FB_k)=C$ from earlier, except that the FairBots---being algorithms which halt---only concern proofs up to a certain bounded length, $k$. Hence the motivation of this paper: to establish a version of L\"{o}b's theorem for proofs bounded in length by a parameter,~$k$.  In rough terms, we prove:
\begin{theorem*}[Parametric Bounded L\"{o}b]
Suppose $p(-)$ is a logical formula with a single unquantified variable, and that $f:\NN \to \NN$ is computable and  exceeds a certain asymptotic lower bound.  Then $\exists\hat k$ :
\begin{align*}
             &\proves{} \forall k,\; \bx{f(k)}p(k) \implies p(k)\\
\Implies &\proves{} \forall k>\hat k, \; p(k)
\end{align*}
\end{theorem*}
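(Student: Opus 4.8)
The plan is to follow the classical proof of L\"ob's theorem via a G\"odel fixed point, but to carry an explicit \emph{proof-length} budget through every step and to make the diagonalization uniform in the parameter $k$. Throughout I would rely on bounded analogues of the Hilbert--Bernays--L\"ob derivability conditions, which is where the hypothesis that ``certain bounds are satisfied by the G\"odel encoding'' is used: namely (i) a \textbf{necessitation cost} $h$, so that any proof of a sentence $q$ of length $n$ converts into a proof of the sentence $\bx{n}q$ of length at most $h(n)$; and (ii) a \textbf{composition cost}, so that a length-$a$ proof of $q \implies r$ and a length-$b$ proof of $q$ combine into a length-$(a+b+c_0)$ proof of $r$. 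I would also use monotonicity of the bounded box, $\bx{a}q \implies \bx{b}q$ for $a \le b$.

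First I would establish a parametric bounded diagonal lemma: for a computable function $g$ to be chosen later, produce a single formula $\psi$ with one free variable such that
$$\proves{} \forall k,\; \psi(k) \iff \big(\bx{g(k)}\psi(k) \implies p(k)\big),$$
together with explicit bounds on the length $|\psi(k)|$ of each instance and on the length $L_0(k)$ of this fixed-point proof. Crucially, because self-reference is routed through a G\"odel number rather than literal substitution, both $|\psi(k)|$ and $L_0(k)$ grow only like a fixed polynomial in $|p(k)|$, $\log g(k)$, and $\log f(k)$; in particular neither contains $g(k)$ or $f(k)$ as an additive term of full magnitude. This is the one fact that keeps the bookkeeping below from being circular.

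Next I would run the L\"ob chain while tracking lengths. Given a proof of $\psi(k)$ of length $\le g(k)$, the necessitation cost yields a proof of $\bx{g(k)}\psi(k)$ of length $\le h(g(k))$; composing this with the forward direction of the fixed point and with its length-$L_0(k)$ proof produces a proof of $p(k)$ of length at most
$$F(k) := g(k) + h(g(k)) + L_0(k) + c_0.$$
Formalizing this transformation inside $\PA$ gives $\proves{} \bx{g(k)}\psi(k) \implies \bx{F(k)}p(k)$, and, provided
$$F(k) \le f(k) \qquad (\mathrm{A}),$$
monotonicity upgrades the right-hand box to $\bx{f(k)}$. Chaining with the hypothesis instance $\bx{f(k)}p(k) \implies p(k)$ yields $\bx{g(k)}\psi(k) \implies p(k)$, which by the fixed-point equivalence is precisely $\psi(k)$. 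Reading off the length of this derivation gives a computable bound $L_1(k)$ which---by the non-circularity noted above---is again only polynomial in $|p(k)|$, $\log f(k)$, $\log g(k)$, and the length of the given hypothesis proof. The L\"obian punchline is that $\PA$ now verifies that this very derivation is a proof of $\psi(k)$ of length $L_1(k)$, so if
$$L_1(k) \le g(k) \qquad (\mathrm{B}),$$
then $\proves{} \bx{g(k)}\psi(k)$, and one final modus ponens against the forward fixed point delivers $\proves{} p(k)$. Since every construction is uniform in $k$, the whole argument formalizes under a universal quantifier, giving $\proves{} \forall k > \hat k,\; p(k)$ once (A) and (B) hold for all such $k$.

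The hard part will be discharging (A) and (B) simultaneously, and this is exactly what forces the asymptotic lower bound on $f$. Constraint (B) demands that $g(k)$ be at least the slowly growing quantity $L_1(k)$, while (A) demands that $f(k)$ dominate $g(k) + h(g(k))$; since $L_1(k)$ depends on $f$ and $g$ only logarithmically, the two form a benign coupled system. I would resolve it by first choosing $g$ to be a fixed polynomial in $|p(k)| + k + \log f(k)$ large enough to guarantee (B), and then reading (A) as the requirement that $f$ eventually exceed $g(k) + h(g(k)) + L_0(k) + c_0$---this is the ``certain asymptotic lower bound''. The delicacy, and the reason the encoding bounds matter, is that a poorly behaved (say exponential) necessitation cost $h$ would make the lower bound on $f$ astronomically large or even incompatible with (B); the encoding hypotheses are precisely what keep $h$ tame enough for a common solution to exist, after which any $\hat k$ beyond which both (A) and (B) hold completes the proof.
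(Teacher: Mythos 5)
Your proposal follows essentially the same route as the paper's proof: the same L\"obian fixed point $\psi(k) \iff \bigl(\bx{g(k)}\psi(k) \implies p(k)\bigr)$ obtained from a parametric diagonal lemma, the same bounded derivability conditions with explicit costs (your necessitation cost $h$ is the paper's proof-expansion function $\Ee$, your composition cost is its Implication Distribution, your numeral-instantiation bookkeeping is its Quantifier Distribution), and the same final move of discharging the two length constraints by choosing $g$ to sit between the ``small'' self-referential costs and $f$. The only substantive difference is that your choice of $g$ (a polynomial in $|p(k)| + k + \log f(k)$) is coarser than the paper's $g(k) = \floor{\sqrt{(\lg k)(\Ee^{-1} f(k))}}$, so your argument would impose a larger asymptotic lower bound on $f$ than the paper's $f(k) \succ \Ee\Oo\lg k$, which is still consistent with the theorem as roughly stated.
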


\subsection{Comparison to previous work}

As mentioned, \citet{Tennenholtz:2004:Program} first defined program equilibria and studied various `non-robust' examples similar to IsMeBot above, which depend on program equality.  In particular, if one agent is written in C++ while the other is written in Python, they will defect against each other.  Examples studied by \citet{Fortnow:2009:Program} are similarly fragile.

Later, \citet{Peters:2012:Definable} consider agents encoded as a first-order formulas over the integers which can reference the G\"{o}del-numbering of the formula for the other player as well as its own, but these agents are non-computable in a way similar to those of \citet{Barasz:2014:RobustCooperation} and \citet{LaVictoire:2014:PrisDilemmaLob}.

By comparison, the program equilibria exhibited here are both computable and {\em robust}, in that they do not depend on tests for program equality, and generally exist between many pairs of agents provided they both follow a certain principle of fairness, in which a new bounded L\"{o}b's Theorem plays a crucial role.

\subsection{Long-Term Relevance}

As automated reasoning and decision-making systems improve, it is plausible that some such systems might exhibit a capacity to reason in generality about their own design principles, and those of other systems.  As an illustrative example, such a system can be designed expressly today: a theorem-prover can be handed a copy of its own source code and queried to write proofs about it.  Less contrivedly, there might be economic value in creating systems that can reason about themselves and others, such as for collaboration or negotiation.  For example, a human can reason that he is mentally outclassed in the middle of a competitive game of Go against a new player, and therefore resign to hedge his losses.  Such reasoning invokes a theory of the reasoning capacity of one's opponent, and of oneself: algorithms reasoning about algorithms.

It therefore seems prudent to explore what game-theoretic dynamics emerge from algorithms reasoning about each other, beginning with the simplest cases we can currently state and examine, similar in spirit to the way RAND Corporation's Thomas Schelling began his understanding of nuclear deterrence~\citep{Schelling:1958:ConflictProspectus,Schelling:1966:Arms}, by analyzing simple examples of non-zero-sum games \citep{Schelling:1958:Reinterpretation}.

In this paper, we find that classical game theory---and more generally, causal decision theory \citep{Gibbard:1978}---is not an adequate framework for describing the competitive interactions of algorithms that reason about the source codes of their opponent algorithms and themselves. When given read access to one another's source code---an extreme scenario for two humans, but trivial for computer systems---competing algorithms can exhibit counterintuitive ``L\"{o}bian'' behaviors which, among other things, can robustly achieve cooperative outcomes that outperform classical Nash equilibria and correlated equilibria.  Moreover, the time at which each algorithm outputs its cooperative decision occurs later in time than the causal pathway by which it benefits from the decision (namely, the pathway wherein its opponent predicts its behavior using its source code; see Section \ref{sec:cdt}).  

Thus, without further investigation, our more classical intuitions about what group-level behaviors will emerge from such algorithms may miss the mark entirely.

\section{Fundamentals}
Here we begin building up the main technical result of the paper.  The algorithms examined here will make use of provability logic as a way of ``reasoning about reasoning'', and the main resource bounds on the algorithms, for simplicity, will be the lengths of the proofs they may discover.

\subsection{Proof Length and Notation}
If the first line of a three-line proof is so long that it would not fit on any physical computer system, saying the proof is ``only three lines long'' is not very descriptive.  Therefore, we will measure proof length in {\em characters} instead of lines, the way one might measure the size of a text file on a computer.  An extensive analysis of proof lengths measured in characters is covered by \citet{Pudlak:1998}.

We will fix a proof system $S$ (e.g. an extension of Peano Arithmetic) throughout, and write
$$S \proves{n} \phi, \quad \text{or simply} \quad \proves{n} \phi$$
to mean that there exists an $S$-proof of $\phi$ using $n$ or fewer characters.  After a choice of G\"{o}del encoding for $S$, it is customary to write $\bx{}\phi$ for $\exists n : Bew(n,\qquote{\phi})$, i.e., there exists a number $n$ encoding a proof of $\phi$.  This allows $S$ to indirectly talk about the existence of proofs in $S$.  We will extend this definition to talk about proof lengths:
$$\bx{n}\phi \quad \text{means} \quad \exists m : Bew(m,\qquote{\phi}) \AND \mathrm{ProofLength}(m)<n$$
where $\mathrm{ProofLength}(m)$ denotes the length, in characters, of the proof encoded by $m$.  In other words, $\bx{n}\phi$ is the $S$-encoded statement that $\phi$ can be proven in $S$ with $n$ or fewer characters.

\subsection{Proof System}\label{sec:system}
We let $S$ be any first-order proof system that
\begin{itemize}
\item[1)] can represent computable functions in the sense of Section \ref{sec:rep},
\item[2)] can write any number $k\in\NN$ using $\Oo\lg k$ symbols, and 
\item[3)] allows the definition and expansion of abbreviations during proofs.
\end{itemize}
For example, we could take Peano Arithmetic, where each proof line is either 
\begin{itemize}
\item an axiom, or
\item an application of Modus Ponens from lines above it,
\end{itemize}
and additionally allow ourselves to write numbers in a binary format, and allow proof lines which are
\begin{itemize}
\item the definition of an abbreviation that may be used in subsequent lines, or 
\item an expansion of an abbreviation used in a previous line.
\end{itemize}

\noindent We have chosen to allow abbreviations in our proof system for two reasons.  The first is that real-world automated proof systems will tend to use abbreviations because of memory constraints.  The second is that abbreviations make the lengths of the shortest proofs in this system slightly easier to analyze: for example, if a number $N$ with a very large number of digits occurs in the shortest proof of a proposition, it will not occur multiple times; instead, it will occur only once, in the definition of an abbreviation for it.  Then, we don't have to carefully count the number of times the numeral occurs in the proof to determine its contribution to the proof length; its contribution will simply be linear in its length, or $\lg N$.

We write 
\begin{itemize}
\item[] $\Lang(S)$ for the language of $S$, 
\item[] $\Lang_r(S)$ for the formulas in $\Lang(S)$ with $r$ free variables, and 
\item[] $\Const(S)$ for the set of constants in $S$ (e.g. $0$, $\Ss 0$, etc.).
\end{itemize}

\subsection{G\"{o}del Encoding}
We fix throughout a G\"{o}del numbering
 $$\#(-) : \Lang(S) \to \NN$$
and a ``numeral'' mapping
$$\numeral(-) : \NN \to \Const(S)\subseteq \Lang(S)$$
for expressing naturals as constants in $S$.  Note that in traditional $\PA$, for example, $\numeral 5 = \Ss\Ss\Ss\Ss\Ss0$.  However, to be more realistic we have assumed that $S$ uses a binary encoding to be more efficient, so e.g.,
$$\numeral 5 = 101.$$
The maps $\#(-)$ and $\numeral(-)$ combine to form a G\"{o}del encoding
$$\qquote{(-)} : \Lang(S)\to\Const(S)$$
$$\qquote{\phi}:=\numeral \# \phi$$
which allows $S$ to write proofs about itself.

\subsection{Convention for Representing Computable Functions}\label{sec:rep}
The astute reader will notice that throughout, although $\PA$ and related first-order theories typically have no symbols for functions, we will often write objectionable expressions like
$$\proves{} \ldots \text{something about $f(x)$} \ldots$$
where $f:\NN\to\NN$ is some computable function.

However, there is a convention for interpreting such statements.  It is known \citep[see, e.g. Theorem 6.8 of][Part II]{Cori:2001} that for any computable function $f:\NN\to\NN$, there exists a ``graph'' predicate $\Gamma_f(-,-)\in\Lang_2(\PA)$ such that
$$\forall x\in\NN, \; \PA \proves{} \forall y, \; \Gamma_f(\numeral x,y) \iff y = \numeral f(x)$$

\noindent We have assumed that $S$ is capable of representing computable functions in this way (e.g., by being an extension of $\PA$).

The two-place predicates $\Gamma_f$ are cumbersome in writing because each usage introduces a quantifier.  For example, if we we have functions $f$, $g$ and $h$ and we want to say that $S$ proves that for any $x$ value, $f(x) < g(x) + h(x)$, technically we should write
$$\proves{} \forall x \forall y_1 \forall y_1 \forall y_3,\; \Gamma_f(x,y_1) \AND \Gamma_g(x,y_2)\AND \Gamma_h(x,y_3) \implies y_1 < y_2 + y_3$$
However, for easier reading, in such cases we will abuse notation and write
$$\proves{} \forall x, \; f(x) < g(x) + h(x),$$
leaving the expansion in terms of $\Gamma$'s and $\forall y$'s as an exercise to any willing reader.

\subsection{Asymptotic Notation}
We use the convention that $f \prec g$ means that for any $M\in \NN$, there exists an $N\in\NN$ such that $\forall n>N, Mf(n) < g(n)$.  We write $\Oo g$ for the set of functions $f\preceq g$, and for a specific function $\Ee$ we will sometimes write write $\Ee \Oo g$ for the set of functions of the form $\Ee \circ f$ where $f\in \Oo g$.

\section{A Parametric Diagonal Lemma}
L\"{o}b's Theorem can be proven via the classical Diagonal Lemma \citep{Carnap:1934}, which states that for any formula $F(-)\in\Lang_1(S)$ (having one free variable), there exists a sentence $\psi\in\Lang_0(S)$ (with no free variables) such that 
$$\proves{} \psi \iff F(\qquote\psi).$$
However, to reason about computer systems with certain as-yet unset parameters, we will need a generalization of the Diagonal Lemma for formulas with free variables to represent those parameters in a way that avoids writing a separate proof for every instance of the parameters:

\begin{proposition}[Parametric Diagonal Lemma]Suppose $S$ is a first-order theory capable of representing all computable functions, as in Section \ref{sec:rep}.  Then for any predicate $G\in\Lang_{r+1}(S)$, there exists a predicate $\psi\in\Lang_r(S)$ such that
$$\proves{} \forall \bar k = (k_1,\ldots,k_r), \; \psi(\bar k) \iff G(\qquote\psi,\bar k)$$
\end{proposition}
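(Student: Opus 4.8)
The plan is to replay the classical diagonalization argument, treating the parameters $\bar k$ as inert spectators that never enter the self-referential substitution. The one gadget I need is the computable \emph{diagonal function} $d:\NN\to\NN$: reading $n$ as the G\"odel number of a formula with a distinguished free variable $x_0$, let $d(n)$ be the G\"odel number of the formula obtained by substituting the numeral $\numeral n$ for $x_0$ and leaving every other free variable untouched. This is a purely syntactic manipulation of codes, hence computable, so by the representability assumption of Section~\ref{sec:rep} it is represented in $S$ by a graph predicate $\Gamma_d$ satisfying $\proves{} \forall y,\; \Gamma_d(\numeral n,y)\iff y=\numeral{d(n)}$ for each fixed $n\in\NN$.

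Next I would fix the convention that, in a predicate of $\Lang_{r+1}(S)$, the first free variable is the ``code slot'' and the remaining $r$ are the parameters $\bar k$. Given $G(x,\bar k)$, form
$$\theta(x,\bar k) := G(d(x),\bar k),$$
to be read, via the abuse of notation of Section~\ref{sec:rep}, as $\exists y\,[\Gamma_d(x,y)\AND G(y,\bar k)]$; then $\theta\in\Lang_{r+1}(S)$, since only $x$ and $\bar k$ occur free. Putting $q:=\#\theta$, define the diagonal predicate
$$\psi(\bar k) := \theta(\numeral q,\bar k)\in\Lang_r(S).$$
The self-referential crux is that $d(q)$ is, by the very definition of $d$, the code of the formula obtained by substituting $\numeral q$ for $x$ in $\theta$, which is exactly $\theta(\numeral q,\bar k)=\psi$; hence $d(q)=\#\psi$.

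To close, I would instantiate the representability of $d$ at the concrete number $q$, which yields the \emph{parameter-free} theorem $\proves{} \forall y\,[\Gamma_d(\numeral q,y)\iff y=\qquote\psi]$, using $\numeral{d(q)}=\numeral{\#\psi}=\qquote\psi$. Feeding this into the displayed form of $\psi(\bar k)=\exists y\,[\Gamma_d(\numeral q,y)\AND G(y,\bar k)]$ collapses the existential, so $\proves{}\psi(\bar k)\iff G(\qquote\psi,\bar k)$ with $\bar k$ still free; universally generalizing then gives $\proves{}\forall\bar k,\;\psi(\bar k)\iff G(\qquote\psi,\bar k)$, as required.

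I expect the main obstacle to be bookkeeping rather than any new idea: one must set up $d$ so that it acts on only the single distinguished variable of a multi-variable formula, and then verify that the parameters $\bar k$ are genuinely inert. The decisive point making this work is that the arguments fed to $d$ never involve $\bar k$, so the sentence pinning the value of $d$ at $q$ to $\qquote\psi$ is closed (parameter-free); the biconditional therefore holds uniformly in $\bar k$ and survives the final $\forall\bar k$. Everything else is the familiar diagonal-lemma manipulation, merely relativized over the $r$ extra free variables.
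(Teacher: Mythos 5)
Your proof is correct and takes essentially the same route as the paper: your diagonal function $d$ is exactly the paper's ``partial self-evaluation function'' $e$ (substituting $\numeral n$ for the distinguished variable of the formula coded by $n$), and your $\theta$, $\numeral q$, and $\psi(\bar k)=\theta(\numeral q,\bar k)$ correspond precisely to the paper's $\beta$, $\qquote\beta$, and $\psi(\bar k)=\beta(\qquote\beta,\bar k)$. The only difference is expository: you unpack the graph predicate $\Gamma_d$ and the collapse of the existential explicitly, where the paper hides this under its Section \ref{sec:rep} notational convention.
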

\begin{proof}
We define a ``partial self-evaluation function'' $e:\NN\to\NN$ as follows: 
$$e(n) = \begin{cases} 
\#\left[\theta(\qquote\theta,-,\ldots,-)\right] &\mbox{if } n = \#\theta \text{ for some }\theta\in\Lang_{r+1}(S) \\ 
0 & \mbox{otherwise }
\end{cases}
$$
Now, $e$ is computable, and therefore representable in $\Lang(S)$, so we can define $\beta\in\Lang_{r+1}(S)$ by 
$$\beta(n,\bar k) := G(e(n),\bar k)$$ 
(using the notational convention of Section \ref{sec:rep} to avoid writing extra quantifiers and $\Gamma_e$'s).  Then, $\forall \theta\in\Lang_{r+1}(S)$,
$$\proves{} \forall \bar k\; \beta(\qquote{\theta},\bar k) \iff G(\qquote{\theta(\qquote{\theta},-,\ldots,-)},\bar k)$$
Now let $\theta = \beta$, so we have
$$\proves{} \forall \bar k\; \beta(\qquote(\beta),\bar k) \iff G(\qquote{\beta(\qquote{\beta},-,\ldots,-)},\bar k)$$
Finally, taking $\psi(\bar k) = \beta(\qquote{\beta},\bar k)$ yields the desired result
$$\proves{} \forall \bar k\; \psi(\bar k) \iff G(\qquote{\psi},\bar k)$$

\end{proof}

\section{A Bounded Provability Predicate, \texorpdfstring{$\bx{k}$}{box k}}
\subsection{Defining \texorpdfstring{$\bx{k}$}{box k}}
Given a choice of G\"{o}del encoding for Peano Arithmetic, it is classical that a predicate $Bew(-,-) \in \Lang_2(S)$ exists such that $Bew(m,n)$ means, in natural language, that the number $m$ encodes a proof in $\PA$, and that the number $n$ encodes the statement it proves.  So, the standard provability operator $\bx{}:\Lang(\PA)\to\Lang(\PA)$ can be defined as
$$\bx{}\phi := \exists m : Bew(m,\qquote\phi).$$
We take for granted that $Bew$ exists for $S$ and can be extended to a three-place predicate $Bew(-,-,-) \in \Lang_2(S)$ such that $Bew(m,n,k)$ means that 

\begin{itemize}
\item $m$ encodes a proof in $S$,
\item $n$ encodes the statement it proves, and
\item the proof encoded by $m$ uses at most $k$ characters when written in the language of $S$ ({\em not} when written using the encoding.)
\end{itemize}
Then we can define a ``bounded'' box operator:
$$\bx{k}\phi = \exists m : Bew(m,\qquote{\phi},k).$$
We also take for granted a computable ``single variable evaluation'' function, $Eval_1:\NN\to\NN$, such that for any $\phi(-)\in\Lang_1(S)$, 
$$Eval_1(\qquote\phi,k) = \qquote{\phi(\numeral k)}$$
Since $Eval_1$ is computable, it can be represented in $\Lang(S)$ as in Section \ref{sec:rep}.  
This allows us to extend the $\bx{k}$ operator to act on sentences $\phi(-)$ with an unbound variable:
$$(\bx{k}\phi)(\ell) := \exists m : Bew(m,Eval_1(\qquote\phi,\ell),k)$$
In words, ``There is a proof using $k$ or fewer characters of the formula $\phi(\ell)$''.

\subsection{Basic Properties of \texorpdfstring{$\bx{k}$}{box k}}
Each of the following properties will be needed multiple times during the proof of Parametric Bounded L\"{o}b.  Since the proof is already highly symbolic, we give these properties English names to recall them. 

\begin{property}[Implication Distribution]
There is a constant $c\in \Const(S)$ such that for any $p,q\in\Lang(S)$,
$$\proves{} \forall a\forall b,\; \bx{a}(p\implies q) \implies (\bx{b}p \implies \bx{a+b+c} q).$$
\end{property}

\begin{proof}[Proof sketch]
The fact that one can combine a proof of an implication with the proof of its antecedent to obtain a proof of its consequent can be proven in general, with quantified variables in place of the G\"{o}del numbers of the particular statements involved.  Let us suppose this general proof has length $c_0$.  Then, we need only instantiate the statements in it to $p$ and $q$.  However, if $p$ and $q$ are long expressions, they can have been abbreviated in the earlier proofs without lengthening them, so they can be written in abbreviated form again during this step.  Hence, the total cost of combining the two proofs is around $c=2c_0$, which is constant with respect to $p$ and $q$.
\end{proof}

\begin{property}[Quantifier Distribution]
There is a constant $C\in \Const(S)$ such that for any $\phi(-) \in \Lang_1(S)$,
\begin{align*}
             &\proves{}\bx{N}\left(\forall k \phi(k)\right)\\
\Implies &\proves{} \forall k \; \bx{C+2N+\lg k} \phi(k)\text{, which in turn}\\
\Implies &\proves{} \forall k \; \bx{\Oo\lg k} \phi(k)
\end{align*}
\end{property}

\begin{proof}
An encoded proof of $\phi(\numeral K)$ for a specific $K$ can be obtained by specializing the conclusion of an $N$-character encoded proof of $\forall k \phi(k)$ and appending the specialization with $\numeral K$ in place of $k$ at the end.  To avoid repeating $\numeral K$ numerous times in the final line (in case it is large), we will use an abbreviation for $\phi$.  Thus the appended lines can say:
\begin{itemize}
\item[(1)] let $\Phi$ stand for $\qquote{\phi}$
\item[(2)] $\Phi(\numeral K)$
\end{itemize}
Let us analyze how many characters are needed to write such lines.  First, we need a string $\Phi$ to use as an abbreviation for $\phi$.  Since no string of length $\frac{N}{2}$ has yet been used as an abbreviation in the earlier proof (otherwise we can shorten the proof by not defining and using the abbreviation), we can surely have $\mathrm{Length}(\Phi)<\frac{N}{2}$.  We also need some constant $c$ number of characters to write out the system's equivalent of ``let'', ``stand for'', ``('', and ``)''.  Finally, we need $\lg K$ characters to write $\numeral K$.  Altogether, the proof was extended by $C+N+\lg(k)$ characters, for a total length of $2N+c+\lg k$.  
\end{proof}

\section{Parametric Bounded L\"{o}b}\label{sec:blob}

\begin{definition}[Proof expansion function]\label{def:E}We choose a computable function $\Ee:\NN\to\NN$ to bound the expansion of proof lengths when we G\"{o}del-encode them.  Its definition is that it must be large enough to satisfy the following two properties:
\end{definition}

\begin{property}[Bounded Necessitation]
$\forall \phi \in \Lang(S)$,
\begin{align}
             &\proves{k} \phi \\
\Implies &\proves{\Ee k} \bx{k} \phi
\end{align}
\end{property}

\begin{property}[Bounded Inner Necessitation]
For any $\phi \in \Lang(S)$,
$$\proves{} \bx{k}\phi \implies \bx{\Ee k}\bx{k}\phi.$$
\end{property}

\noindent {\bf Estimating $\Ee$.}  How large must $\Ee$ be in practice?  G\"{o}del numberings for sequences of integers can be achieved in $\Oo n$ space \citep{Tsai:2002}, as can G\"{o}del numberings of term algebras \citep{Tarau:2013}.  To check that one line is an application of Modus Ponens from previous lines, if the proof encoding indexes the implication to which MP is applied, is a test for string equality that is linear in the length of the lines.  Finally, to check that an abbreviation has been applied or expanded, if the proof encoding indexes where the abbreviation occurs, is also a linear time test for string equality.
Thus, it seems reasonable to expect $\Ee \in \Oo k$ for real-world theorem-provers.  But however large it may be, in any case we have:

\begin{theorem}[Parametric Bounded L\"{o}b]\label{thm:pblob}
Suppose $p(-)\in\Lang_1(S)$ is a formula with a single unquantified variable, and that $f:\NN \to \NN$ is computable and satisfies $f(k) \succ \Ee\Oo\lg k$.  Then $\exists\hat k$ :
\begin{align*}
             &\proves{} \forall k,\; \bx{f(k)}p(k) \implies p(k)\\
\Implies &\proves{} \forall k>\hat k, \; p(k)
\end{align*}
\end{theorem}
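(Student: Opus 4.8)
The plan is to follow the classical inference-rule form of L\"{o}b's Theorem---from $\proves{}\bx{}p\implies p$ derive $\proves{}p$ through a self-referential sentence---but to thread the parameter $k$ through every line and, above all, to keep a running account of the character-length bound on each occurrence of $\bx{}$, making sure it never overshoots $f(k)$. First I would fix a bound function $g\in\Oo\lg k$ with $g(k)\succeq\lg k$ (concretely, $g(k)=2\lg k$ will do) and apply the Parametric Diagonal Lemma with $r=1$ to the predicate $G(x,k):=\big(\exists m:Bew(m,Eval_1(x,k),g(k))\big)\implies p(k)$. This produces $\psi\in\Lang_1(S)$ with $\proves{}\forall k,\;\psi(k)\iff(\bx{g(k)}\psi(k)\implies p(k))$, the parametric analogue of the usual L\"{o}bian fixed point, in which the inner box is bounded at exactly the budget $g(k)$.

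Next I would push a box through the left-to-right direction $\proves{}\forall k,\;\psi(k)\implies(\bx{g(k)}\psi(k)\implies p(k))$. Since this is a single proof of fixed length, Bounded Necessitation followed by Quantifier Distribution gives $\proves{}\forall k,\;\bx{h_1(k)}[\psi(k)\implies(\bx{g(k)}\psi(k)\implies p(k))]$ with $h_1(k)=C+2N_1+\lg k\in\Oo\lg k$. Two applications of Implication Distribution---specializing its free bounds to $b=g(k)$ and $b'=\Ee g(k)$---together with Bounded Inner Necessitation to absorb the iterated box $\bx{}\bx{g(k)}\psi(k)$, then collapse this to $\proves{}\forall k,\;\bx{g(k)}\psi(k)\implies\bx{h_2(k)}p(k)$, where $h_2(k)=h_1(k)+g(k)+\Ee g(k)+2c$ is dominated by $\Ee g$ with $g\in\Oo\lg k$.

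Now the hypothesis enters. Because $f\succ\Ee\Oo\lg k$ and $h_2$ is dominated by $\Ee g$ with $g\in\Oo\lg k$, we have $h_2(k)<f(k)$ for all $k>\hat k_1$, so monotonicity of $\bx{}$ in its bound gives $\proves{}\forall k>\hat k_1,\;\bx{h_2(k)}p(k)\implies\bx{f(k)}p(k)$. Chaining the previous display, this monotonicity, and the hypothesis $\proves{}\forall k,\;\bx{f(k)}p(k)\implies p(k)$ yields $\proves{}\forall k>\hat k_1,\;\bx{g(k)}\psi(k)\implies p(k)$; the right-to-left direction of the diagonal equivalence turns this into $\proves{}\forall k>\hat k_1,\;\psi(k)$. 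This last is a proof of some fixed length $N_2$, so one more round of Bounded Necessitation and Quantifier Distribution produces $\proves{}\forall k>\hat k_1,\;\bx{g_0(k)}\psi(k)$ with $g_0(k)=C+2N_2+\lg k\in\Oo\lg k$. Since $g\succeq\lg k$ eventually dominates $g_0$, we get $g_0(k)\le g(k)$ for $k>\hat k_2$, hence $\proves{}\forall k>\hat k,\;\bx{g(k)}\psi(k)$ with $\hat k=\max(\hat k_1,\hat k_2)$; combined with $\bx{g(k)}\psi(k)\implies p(k)$ this gives the desired $\proves{}\forall k>\hat k,\;p(k)$.

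The hard part will be the bound bookkeeping, and specifically an apparent circularity: the budget $g(k)$ hard-wired into $\psi$ must be large enough to contain the eventual proof of $\psi(k)$, yet the length $N_2$ of that proof is what determines $g_0$. I would resolve this by observing that $\hat k_2$---the threshold past which $g_0\le g$---is invoked only \emph{after} the fixed-length-$N_2$ proof of $\forall k>\hat k_1,\psi(k)$ is complete, and that $\hat k_1$ depends only on $f$, $\Ee$, and $g$ and not on $N_2$; hence $N_2$ is a genuine constant, $g_0\in\Oo\lg k$, and any $g\in\Oo\lg k$ with $g\succeq\lg k$ outgrows it for large $k$. The remaining thing to verify is that $S$ can actually prove the guarded inequalities $h_2(k)<f(k)$ and $g_0(k)\le g(k)$ used above; this is precisely what the asymptotic hypothesis $f\succ\Ee\Oo\lg k$ buys, since with $g\in\Oo\lg k$ it forces $\Ee g\prec f$.
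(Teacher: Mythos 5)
Your proposal is correct and takes essentially the same route as the paper's proof: the same diagonal predicate $G(n,k) = \bx{g(k)}(-)(k)\implies p(k)$ fed to the Parametric Diagonal Lemma, the same chain of Bounded Necessitation, Quantifier Distribution, two Implication Distributions and Bounded Inner Necessitation, the hypothesis to discharge $\bx{f(k)}p(k)$, and a second necessitation round to close the loop---your only deviations being the harmless simplifications of taking $g(k)=2\lg k$ concretely (the paper instead requires $\lg k \prec g$ and $\Ee g \prec f$) and plugging the bound $\Ee g(k)$ directly into the second Implication Distribution rather than routing through the paper's intermediate function $h$ with $\Ee g \prec h \prec f$. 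One caution: your stated principle that any $g\in\Oo\lg k$ with $g\succeq \lg k$ eventually dominates $g_0(k)=C+2N_2+\lg k$ is false in general (take $g=\lg k$ itself), but your concrete choice $g(k)=2\lg k$ does dominate it, so the argument stands as instantiated.
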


\noindent{\em Note:} In fact a weaker statement
$$\proves{} \forall k>k_1,\; \bx{f(k)}p(k) \implies p(k)$$
is sufficient to derive the consequent, since we could just redefine $f(k)$ to be $0$ for $k\leq k_1$ and then $\bx{f(k)}p(k) \implies p(k)$ is vacuously true and provable for $k \leq k_1$ as well.

\begin{proof}
{\em (In this proof, each centered equation will follow directly from the one above it unless otherwise noted.)}

We begin by choosing some function $g(k)$ such that $\lg k \prec g(k)$ and $\Ee g(k) \prec f(k)$.  For example, we could take $g(k) = \floor{\sqrt{(\lg k)(\Ee\- f(k))}}$.  Define a predicate $G(-,-)\in\Lang_2(S)$ by
$$G(n,k) := \left(\exists m : Bew(m,Eval_1(n,k),g(k))\right) \implies p(k)$$
so that for any $\phi(-)\in\Lang_1(S)$,
$$G(\qquote\phi,k) = \bx{g(k)}\phi(k) \implies p(k).$$
Now, by the Parametric Diagonal Lemma, $\exists \psi(-)\in\Lang_1(S)$ such that in some number of characters $n$,
\eqn{\label{eqn1}
\proves{n} \forall k \; \psi(k) \iff G(\qquote\psi,k)
}
By Bounded Necessitation,
$$\proves{} \bx{n} \left(\forall k \; \psi(k) \iff G(\qquote\psi,k)\right)$$
By Quantifier Distribution, since $n$ is constant with respect to $k$,
$$\proves{} \forall k \; \bx{\Oo\lg k} \left(\psi(k) \iff G(\qquote\psi,k)\right),$$
in which we can specialize to the forward implication,
$$\proves{} \forall k \; \bx{\Oo\lg k} \left(\psi(k) \implies G(\qquote\psi,k)\right)$$
By Implication Distribution of $\bx{\Oo\lg k}$,
$$\proves{} \forall k \forall a \; \bx{a}\psi(k) \implies \bx{a+\Oo\lg k}G(\qquote\psi,k)$$
By Implication Distribution again, this time of $\bx{a+\Oo\lg k}$ over the implication $G(\qquote\psi,k) = \bx{g(k)}\phi(k) \implies p(k)$, we obtain
$$\proves{} \forall k \forall a \forall b \; \bx{a}\psi(k) \implies 
\left(\bx{b}\bx{g(k)}\psi(k) \implies \bx{a+b+\Oo\lg k}p(k)\right)$$
Now we specialize this equation to $a=g(k)$ and $b=h(k)$, where $h:\NN\to\NN$ is a computable function satisfying $\Ee g(k)\prec h(k)\prec f(k)$, for example $h(k) = \floor{\sqrt{f(k)\Ee g(k)}}$:
$$\proves{} \forall k \; \bx{g(k)} \psi(k) \implies 
\left(\bx{h(k)}\bx{g(k)}\psi(k) \implies \bx{g(k)+h(k)+\Oo\lg k} p(k)\right)$$
Then since $g(k)+h(k)+\Oo\lg k < f(k)$ after some bound $k > k_1$, we have
$$\proves{} \forall k>k_1, \; \bx{g(k)}\psi(k) \implies \left(\bx{h(k)}\bx{g(k)}\psi(k) \implies \bx{f(k)} p(k)\right)$$
Now, by hypothesis, $\proves{} \forall k\; \bx{f(k)}p(k)\implies p(k)$, thus
\eqn{\label{eqn9}
\proves{} \forall k>k_1, \; \bx{g(k)}\psi(k) \implies \left(\bx{h(k)}\bx{g(k)}\psi(k) \implies p(k)\right)
}
Also, without any of the above, from Bounded Inner Necessitation we can write
$$\proves{} \forall k \forall a \; \bx{a}\psi(k)\implies \bx{\Ee a}\bx{a}\psi(k)$$
From this, with $a=g(k)$, we have
$$\proves{} \forall k \; \bx{g(k)}\psi(k)\implies \bx{\Ee g(k)}\bx{g(k)}\psi(k)$$
Now, since $\Ee g(k) < h(k)$ after some bound $k>k_2$, we have
\eqn{\label{eqn12}
\proves{} \forall k > k_2 \; \bx{g(k)}\psi(k)\implies \bx{h(k)}\bx{g(k)}\psi(k)
}
Next, from Equations \ref{eqn9} and \ref{eqn12}, assuming we chose $k_2\geq k_1$ for convenience, we have
\eqn{\label{eqn13}
\proves{} \forall k>k_2,\; \bx{g(k)}\psi(k)\implies p(k)
}
But from Equation \ref{eqn1}, the implication here is equivalent to $\psi(k)$, so we have
$$ \proves{N} \forall k>k_2,\; \psi(k),$$
where $N$ is the number of characters needed for the proof above.  From this, by Bounded Necessitation, we have
$$\proves{} \bx{N} [\forall k>k_2,\; \psi(k)].$$
By Quantifier Distribution of $\bx{N}$,
$$\proves{} \forall k>k_2,\; \bx{\Oo\lg k} \psi(k)$$
and since $\Oo\lg k < g(k)$ after some bound $k>\hat k$, taking $\hat k \geq k_2$ for convenience, we have
\eqn{\label{eqn17}
\proves{}\forall k>\hat k,\; \bx{g(k)}\psi(k).
}
Finally, from Equations \ref{eqn13} and \ref{eqn17} we have
$$\proves{} \forall k>\hat k,\; p(k),$$
as required.
\end{proof}

\section{Robust Cooperation of Bounded Agents in the Prisoner's Dilemma}

\citet{Barasz:2014:RobustCooperation}, \citet{LaVictoire:2014:PrisDilemmaLob}, and others have exhibited various proof-based agents who robustly cooperate in the Prisoner's Dilemma by basing their decisions on proofs about each other's cooperation.  However, their agents are purely logical entities which can discover proofs of unbounded length, and so are impossible to run on a physical computer.  This leaves open the question of whether such behavior is achievable by agents with bounded computational resources.

So, consider the following bounded agent, where $G$ is some increasing, non-negative function to be determined later, and $G=0$ recovers the definition of FairBot from Section 1:

\begin{Verbatim}[frame=single]
def FairBot_k(Opponent) :
	let B = k + G(LengthOf(Opponent))
	search for proof of length at most B that 
		Opponent(FairBot_k) = Cooperate
	if found,
		return Cooperate
	else
		return Defect
\end{Verbatim}

Question: What is $\FB_k(\FB_k)$?  It seems intuitive that each FairBot is waiting for the other to provably cooperate, in a bottomless regression that will exhaust the proof bound B.  Thus, they will find no proof of cooperation, and hence defect.  

However, this turns out not to be the case, as a consequence of Parametric Bounded L\"{o}b.  We let $$p(k) := [\FB_k(\FB_k) = Cooperate].$$
Since $G\geq 0$, $k \leq B$ in the definition of FairBot, so we have 
$$\proves{} \bx{k}p(k) \implies \bx{B}p(k).$$
Now since $\bx{B}p(k)$ is FairBot's criterion for cooperation, we also have 
$$\proves{} \bx{B}p(k) \implies p(k), \textrm{ \ so}$$ 
$$\proves{} \forall k, \; \bx{k} p(k) \implies p(k),$$
whence for sufficiently large $\hat k$, by Parametric Bounded L\"{o}b,
$$\proves{} \forall k>\hat k,\; p(k).$$
In other words, $FairBot_k$ cooperates with $FairBot_k$ for large $k$.  

This result is interesting for three reasons:

\begin{itemize}
\item[1.]  It is {\em surprising}.  100\% of the dozens of mathematicians and computer scientists that I've asked to guess the output of $\FB_k(\FB_k)$ have guessed incorrectly (expecting the proof searches to enter an infinite regress and thus reach their bounds), or have given an invalid argument for cooperation (such as ``it would be better to cooperate, so they will").
\item[2.] It is {\em advantageous}. FairBot outperforms the classical Nash/correlated equilibrium solution (Defect, Defect) to the Prisoner's Dilemma, in a one-shot game with no iteration or future reputation.  Moreover, it does so {\em while being unexploitable}: if an opponent will defect against FairBot, FairBot will find no proof of the opponent's cooperation, so it will also defect.  
\item[3.] It is {\em robust}.  Previous examples of cooperative program equilibria studied by \citet{Tennenholtz:2004:Program} and \citet{Fortnow:2009:Program} all involved cooperation based on {\em equality of programs}, a very fragile condition.  For example, the agent IsMeBot from the introduction will mutually defect against an identical opponent written in a different programming language, or even in a slightly different style.  Such fragility is not desirable if we wish to build real-world cooperative systems.  
\end{itemize}

Taking this robustness further, we next demonstrate mutual cooperative program equilibria among a wide variety of (unequal) agents, provided only that they employ a certain ``principle of fairness".  Given a non-negative increasing function $G$, we say that an agent $A_k$ taking a parameter $k \in \NN$ is {\bf G-fair} if

$$\proves{} \bx{k+G(\textrm{LengthOf}(Opp))}[Opp(A_k) = C] \implies A_k(Opp) = C$$
In other words, if $A_k$ finding a proof that its opponent cooperates is sufficient for $A_k$ to cooperate, we say it is $G$-fair, provided the proofs in the search did not exceed length $k+G(\textrm{LengthOf}(Opp))$.  The agents $\FB_k$ defined above are $G$-fair, and the reader is encouraged to keep these examples in mind for the following result:

\begin{theorem}[{\bf Robust cooperation of bounded agents}]Suppose that 
\begin{itemize}
\item the proof expansion function $\Ee$ (defined in Section \ref{sec:blob}) of our proof system satisfies $\Ee \Oo \lg k \prec k$,
\item $f$ is any function satisfying $\Ee\Oo\lg k \prec f(k) \prec k$, and 
\item $G$ is any increasing function satisfying $G(\ell) > 6f(2^\ell)$.
\end{itemize}
Then, for any $G$-fair agents $A_k$ and $B_k$, we can choose a threshold $r$ such that for all $m,n>r$,
$$A_m(B_n) = B_n(A_m) = Cooperate$$
\end{theorem}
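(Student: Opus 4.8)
The plan is to package the two desired cooperations into a single self-referential proposition and then apply Parametric Bounded L\"{o}b (Theorem \ref{thm:pblob}). Write $\ell_A(m) := \mathrm{LengthOf}(A_m)$ and $\ell_B(n) := \mathrm{LengthOf}(B_n)$; since each agent's source code is a fixed template into which its parameter is written in $\Oo\lg$ space, we have $\ell_A(m) = \Theta(\lg m)$ and $\ell_B(n) = \Theta(\lg n)$, and in particular $2^{\ell_A(m)} \geq m$ and $2^{\ell_B(n)} \geq n$. Define the mutual-cooperation predicate $p(m,n) := [A_m(B_n) = C] \wedge [B_n(A_m) = C]$. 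Because Theorem \ref{thm:pblob} is stated for a single parameter, I would fold $(m,n)$ into one index via a computable pairing $k = \langle m,n\rangle$ (standard, with $\langle m,n\rangle \geq \max(m,n)$) with computable projections $\pi_1,\pi_2$, set $P(k) := p(\pi_1 k, \pi_2 k)$, and aim to verify the hypothesis of the single-variable theorem for $P$.

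First I would establish the L\"{o}bian premise $\proves{} \forall m\,\forall n,\ \bx{\beta(m,n)} p(m,n) \implies p(m,n)$ for a suitable bound $\beta$. Assuming $\bx{\beta(m,n)} p(m,n)$, conjunction elimination---provable in a constant number of characters and transported across the box by Bounded Necessitation and Implication Distribution---yields $\bx{\beta(m,n)+c}[B_n(A_m) = C]$ and $\bx{\beta(m,n)+c}[A_m(B_n) = C]$ for a fixed constant $c$. Feeding the first into the $G$-fairness of $A_m$ (whose opponent is $B_n$) gives $A_m(B_n)=C$ provided $\beta(m,n)+c \leq m + G(\ell_B(n))$, and feeding the second into the $G$-fairness of $B_n$ gives $B_n(A_m)=C$ provided $\beta(m,n)+c \leq n + G(\ell_A(m))$; reconjoining recovers $p(m,n)$. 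Both provisos hold identically if I define
$$\beta(m,n) := \min\big(m + G(\ell_B(n)),\; n + G(\ell_A(m))\big) - c.$$

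It then remains to check the asymptotic side condition of Theorem \ref{thm:pblob} for the folded bound $f_{\mathrm{L}}(k) := \beta(\pi_1 k, \pi_2 k)$, namely $f_{\mathrm{L}}(k) \succ \Ee\Oo\lg k$. Here the hypotheses combine precisely: from $2^{\ell_B(n)} \geq n$ and $G(\ell) > 6 f(2^\ell)$ we get $m + G(\ell_B(n)) \geq G(\ell_B(n)) > 6 f(n) \succ \Ee\Oo\lg n$, and likewise $n + G(\ell_A(m)) \succ \Ee\Oo\lg m$; since each term also dominates its own logarithm ($m \succ \Ee\Oo\lg m$ and $n \succ \Ee\Oo\lg n$, because $\Ee\Oo\lg k \prec k$), both terms, and hence their minimum $\beta(m,n)$, dominate $\Ee\Oo(\lg m + \lg n) = \Ee\Oo\lg\langle m,n\rangle$. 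Substituting $m=\pi_1 k$ and $n=\pi_2 k$ into the premise of the previous paragraph specializes it to $\proves{} \forall k,\ \bx{f_{\mathrm{L}}(k)} P(k) \implies P(k)$ (using the Note after the theorem to discard the small $k$ where $\beta$ would be negative). Theorem \ref{thm:pblob} then delivers $\hat k$ with $\proves{} \forall k > \hat k,\ P(k)$. Taking $r := \hat k$ finishes the argument, since $\langle m,n\rangle \geq \max(m,n) > r$ whenever $m,n > r$, so $P(\langle m,n\rangle) = p(m,n)$ asserts $A_m(B_n) = B_n(A_m) = C$.

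The main obstacle I anticipate is the uniform asymptotic bookkeeping in the third step: one must confirm that the minimum of the two fairness budgets dominates $\Ee\Oo\lg\langle m,n\rangle$ even when $m$ and $n$ are wildly different in magnitude (e.g. $m$ just past threshold while $n$ is astronomically large), which is exactly what the inversion $2^{\ell}$ of the code-length-to-parameter relation, the slack factor $6$, and the calibration $\Ee\Oo\lg k \prec f(k) \prec k$ are designed to guarantee. A secondary but genuine nuisance is the representable-function bookkeeping hidden in the substitution $m = \pi_1 k$, $n = \pi_2 k$ inside the bounded box: one must check, via the conventions of Section \ref{sec:rep} and the $Eval$ machinery, that $\bx{f_{\mathrm{L}}(k)} P(k)$ really is the specialization of $\bx{\beta(m,n)} p(m,n)$ rather than merely resembling it.
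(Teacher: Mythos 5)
Your high-level strategy---manufacture a one-parameter predicate whose L\"{o}bian premise follows from $G$-fairness and then invoke Theorem \ref{thm:pblob}---is the same as the paper's, but your device for collapsing the pair $(m,n)$ into one parameter is genuinely different, and it is exactly where the argument breaks. The paper uses no pairing function: it defines the ``eventual cooperation'' predicate $p(k):=\forall m>k\,\forall n>k,\ [A_m(B_n)=C]\wedge[B_n(A_m)=C]$, so that the only in-box work needed to pass from $\bx{f(k)}p(k)$ to boxed statements about a particular pair is \emph{quantifier instantiation}, whose cost is precisely what Quantifier Distribution controls: two applications give a bound of the form $3C+4f(k)+2\lg m+\lg n$, and the inequalities $3C+\lg n<n$ and $4f(k)+2\lg m<6f(m)<G(\mathrm{LengthOf}(A_m))$ (this is what the factor $6$ and the $2^{\ell}$ in the hypothesis on $G$ are calibrated for) push that bound under the fairness budgets $n+G(\mathrm{LengthOf}(A_m))$ and $m+G(\mathrm{LengthOf}(B_n))$.

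Your folding $P(k):=p(\pi_1 k,\pi_2 k)$ instead requires, inside the box, establishing what the projections of $k$ are: $\bx{f_{\mathrm{L}}(k)}P(k)$ asserts a short proof of the sentence $P(\numeral k)$, which by the conventions of Section \ref{sec:rep} speaks of $\Gamma_{\pi_1}(\numeral k,\,\cdot\,)$ and $\Gamma_{\pi_2}(\numeral k,\,\cdot\,)$; to reach boxed $[A_{m}(B_{n})=C]$ and $[B_{n}(A_{m})=C]$ with the numerals $\numeral m,\numeral n$ written out---which is what $G$-fairness consumes---you must append proofs of $\Gamma_{\pi_1}(\numeral k,\numeral m)$ and $\Gamma_{\pi_2}(\numeral k,\numeral n)$, i.e.\ a verified arithmetic computation on $\lg k$-bit numerals. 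None of the paper's properties (Implication Distribution, Quantifier Distribution, Bounded or Bounded Inner Necessitation) bounds this cost; generically it is a higher-degree polynomial in $\lg k$, not $\Oo\lg k$. And you cannot absorb it: $f$ is given to you, tied to $G$ by $G(\ell)>6f(2^{\ell})$, and may sit barely above $\Ee\Oo\lg k$ (e.g.\ $f(k)=\lg k\cdot\lg\lg k$ when $\Ee$ is linear). Along pairs $k=\langle 0,n\rangle$ your minimum budget is only about $6f(n)$, so a decode cost of order $(\lg n)^{2}$ exceeds the entire budget for infinitely many $k$; then $f_{\mathrm{L}}$ fails $f_{\mathrm{L}}\succ\Ee\Oo\lg k$ on an unbounded set of $k$, and the Note after Theorem \ref{thm:pblob} cannot repair this because the bad $k$ do not form an initial segment. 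This is the issue you set aside as a ``secondary nuisance''; it is in fact the crux, and fixing it requires either a new assumption on the proof system (pairing decode provable in $\Oo\lg k$ characters) or restructuring the predicate as the paper does. Your remaining steps---the fairness argument, the minimum of the two budgets and its asymptotics, and the final appeal to soundness---are fine, up to the same glosses the paper itself makes (extracting a conjunct inside a box, and using fairness and the box properties uniformly in free parameters).
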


\noindent {\bf Feasibility of bounds.} Before proceeding, recall from Section \ref{sec:blob} that we can achieve $\Ee \in \Oo k$ for automatic proof systems that are designed for easy verifiability, in which case $\Ee \Oo \lg k = \Oo \lg k$, well below the $\prec k$ requirement.

\begin{proof}
For brevity, we let
\begin{align}
a(k) &:= G(\mathrm{LengthOf}(A_k)), \\
b(k) &:= G(\mathrm{LengthOf}(B_k)), \\
\alpha(m,n) &:= [A_m(B_n) = Cooperate], \text{ and }\\
\beta(n,m) &:= [B_n(A_m) = Cooperate]
\end{align}
so we can write the $G$-fairness conditions more compactly as
\begin{align}
\label{eqnC1}     \proves{} &\bx{m+b(n)} \beta(n,m) \implies \alpha(m,n) \text{ and}\\
\nonumber    \proves{} &\bx{n+a(m)} \alpha(m,n) \implies \beta(n,m).
\end{align}

Now, $\mathrm{LengthOf}(A_k) > \lg k$ and $\mathrm{LengthOf}(B_k) > \lg k$ since they must reference the parameter $k$ in their code.  Applying $G$ to both sides yields
\eqn{\label{eqnC2} a(k),b(k) > G(\lg k) > 6f(k).}
Define an ``eventual cooperation" predicate:
$$p(k) := \forall m> k,\; \forall n> k, \; \alpha(m,n) \AND \beta(n,m).$$
Using Quantifier Distribution once on the definition of $p(k)$,
$$\proves{} \forall k [\bx{f(k)}p(k) \implies \forall m>k, \; \bx{C+2f(k)+\lg m } [\forall n > k, \; \alpha(m,n) \AND \beta(n,m)]]$$
Applying Quantifier Distribution again,
\eqn{\label{eqnC2a}\proves{} \forall k [\bx{f(k)}p(k) \implies \forall m>k, \forall n > k, \;  \bx{3C+4f(k)+2\lg m +\lg n }[\alpha(m,n) \AND \beta(n,m)]]}
Now, for $m,n$ large and $>k$, we have
\begin{align*}
3C + \lg n  &< n &&\textrm{ \quad and by (\ref{eqnC2}),}\\
4f(k) + 2\lg m  < 6f(m) &< a(m). &&
\end{align*}
Adding these inequalities yields
$$3C+4f(k)+2\lg m +\lg n < n+a(m),$$
so for some $k_1$, from (\ref{eqnC2a}) we derive
$$\proves{} \forall k>k_1,\; [\bx{f(k)}p(k) \implies \forall m>k, \forall n > k, \;  \bx{n+a(m)}\alpha(m,n)].$$
Similarly, we also have
\begin{align*}
3C + 2\lg m  &< m &&\textrm{ \quad and }\\
4f(k) + \lg n  < 5f(n) &< b(n), &&\text{ \quad so for some $k_2\geq k_1$,}
\end{align*}
$$\proves{} \forall k>k_2 \; [\bx{f(k)}p(k) \implies \forall m>k, \forall n > k, \;  \bx{n+a(m)}\alpha(m,n) \AND \bx{m+b(n)}\beta(n,m)]$$
Thus by (\ref{eqnC1}),
$$\proves{} \forall k > k_2 [\bx{f(k)}p(k) \implies \forall m>k, \forall n > k, \;  c(n,m) \AND c(m,n)]\text{, i.e.}$$
$$\proves{} \forall k > k_2, \; \bx{f(k)}p(k) \implies p(k)$$
Therefore, by Parametric Bounded L\"{o}b (and the note following it), for some $\hat k$ we have
$$\proves{} \forall k > \hat k, \; p(k).$$
In other words, for all $m,n>\hat k + 1$, 
$$A_m(B_n)=B_n(A_m)=Cooperate.$$
\end{proof}

\subsection{Ramifications for Causal Decision Theory}\label{sec:cdt}
Causal Decision Theory \citep{Gibbard:1978} is a framework for evaluating the desirability of an action by assessing the causal consequences of the action itself.  The interaction of $\FB_m$ and $\FB_n$ present a challenge to Causal Decision Theory, in a way similar to Newcomb's Problem \citep{Nozick:1969}, a classic scenario wherein one agent is able to predict the actions of another.

Concretely, imagine $\FB_m$ and $\FB_n$ are played against each other while being run on separate computers in separate rooms, and that they will print their final responses, $C$ or $D$, at the same time.  When $\FB_m$ decides to cooperate with $\FB_n$, it does so {\em after} computing a proof that $\FB_n(\FB_m)=C$, but {\em before} its opponent $\FB_n$ actually prints its response.  There is therefore no causal effect transmitted from the value that $\FB_n$ prints to its screen to the value that $\FB_m$ prints to its screen.  So from a purely causal perspective, there is an ``incentive'' for $\FB_n$ to print $D$ instead of $C$, since that would have ``no effect'' on its opponent, and would counterfactually yield the better outcome $(D,C)$ in place of $(C,C)$.  Thus one might argue that $\FB_n$ is acting sub-optimally in this scenario: its response could be changed to obtain a better outcome, $(D,C)$.

However, such reasoning is misplaced from a strategic standpoint.  $\FB_n$ cannot output $D$ while its opponent $\FB_m$ outputs $C$, for that outcome would be logically incoherent.  Although the instance of $\FB_n$ running as Player 2 has no causal effect on the $\FB_m$ running as Player 1, it cannot treat its decision as independent: the outcome $(C,D)$ is simply not attainable by any agent under any circumstances when Player 1 is $\FB_m$.

This prompts a re-thinking of what it means to make an optimal decision as an algorithm whose source code is transparent.  Such questions, and some of their long-term relevance, have already been considered at length in \citet{Soares:2015:TowardIDT}.

\section{Summary}
We have discovered a version of L\"{o}b's Theorem which can be applied to algorithms with bounded computational resources.  This result, in turn, can be used by algorithmic agents that have access to one another's source codes to achieve cooperative outcomes (among other things) that out-perform classical Nash equilibria and correlated equilibria, via conditions that are much more robust than previously known examples depending on program equality.  Moreover, the causal pathway by which each agent benefits from its own decision to cooperate happens {\em before} the agent actually computes its decision, which prompts a re-thinking of the causal analysis of optimal decision-making known as Causal Decision Theory in a setting where decision-making agents are algorithms with transparent source-codes.  

In light of these findings, classical game theoretic results and the intuitions we derive from them may be quite far from describing what we should actually expect from systems of agents capable of reasoning about each other's design.  In order to ensure robust and beneficial long-term deployment of advanced AI technologies in the future, as described in \citet{Russell:2015:ResearchPriorities} and supported by over 100 researchers in the Future of Life Institute's Open Letter \citep{Tegmark:2015:FLIOpenLetter}, it seems prudent to investigate these dynamics ahead of time, so as to be prepared for the sorts of game-theoretic scenarios that might arise between algorithmic agents in the future.

As a direction for potential future investigation, it seems inevitable that other agents described in the purely logical (non-computable) setting of  \citet{Barasz:2014:RobustCooperation} and \citet{LaVictoire:2014:PrisDilemmaLob} will likely have bounded, algorithmic analogs, and that many more general consequences of L\"{o}b's Theorem---perhaps all the theorems of G\"{o}del--L\"{o}b provability logic---will have resource-bounded analogs as well.

\section*{Acknowledgements}
My decision to search for a result in this area was strongly influenced by Paul Christiano's belief that some such result should exist.  As well, conversations with Patrick LaVictoire, Jessica Taylor, Sam Eisenstat, and Jacob Tsimerman were helpful in sanity-checking my ideas and maintaining my interest in the problem.

This research was supported as part of the Future of Life Institute (futureoflife.org) FLI-RFP-AI1 program, grant~\#2015-144576.

\printbibliography

\end{document}